\documentclass[a4paper,UKenglish,cleveref, autoref, thm-restate, nolineno]{socg-lipics-v2021}
\usepackage{hyperref,url}
\hideLIPIcs  %uncomment to remove references to LIPIcs series (logo, DOI, ...), e.g. when preparing a pre-final version to be uploaded to arXiv or another public repository

\makeatletter
\def\s@linecount{} % Vide la macro d'affichage du nombre total de lignes
\makeatother

\AtBeginDocument{\nolinenumbers}

\title{\textsc{HyperLogLog} for probabilists} %TODO Please add
\author{Lucas Gerin}{Universit\'e Paris Nanterre, Bâtiment Allais, 200 avenue de la République, 92000 Nanterre (France)  }{lucas.gerin@parisnanterre.fr}{}{This work is supported by \textsc{Anr} Project \textsc{Louccoum}.}%TODO mandatory, please use full name; only 1 author per \author macro; first two parameters are mandatory, other parameters can be empty. Please provide at least the name of the affiliation and the country. The full address is optional. Use additional curly braces to indicate the correct name splitting when the last name consists of multiple name parts.

\authorrunning{L.Gerin} %TODO mandatory. First: Use abbreviated first/middle names. Second (only in severe cases): Use first author plus 'et al.'

\Copyright{L.Gerin} %TODO mandatory, please use full first names. LIPIcs license is "CC-BY";  http://creativecommons.org/licenses/by/3.0/

\ccsdesc[500]{Mathematics of computing~Probabilistic algorithms}
\ccsdesc[500]{Theory of computation~Design and analysis of algorithms}
\ccsdesc[500]{Theory of computation~Streaming, sublinear and near linear time algorithms}
\keywords{probabilistic algorithms; streaming algorithms; HyperLogLog; analysis of algorithms; concentration inequalities; deviation inequalities} %TODO mandatory; please add comma-separated list of keywords

\category{} %optional, e.g. invited paper

\relatedversion{} %optional, e.g. full version hosted on arXiv, HAL, or other respository/website
\acknowledgements{I want to warmly thank \'Eric Fusy for our discussions about several aspects of the original \texttt{HyperLogLog} article, and possible improvements.}%optional

\AtBeginDocument{
  \definecolor{lipicsYellow}{HTML}{FF00FF} % Code couleur Fuchsia (ou en RGB : {1.0, 0.0, 1.0})
}

\AtBeginDocument{
  \DeclareCaptionLabelFormat{lipics}{\colorbox{lipicsYellow}{\bfseries\color{white}#1~#2}}
}
\usepackage{algorithm}
\usepackage[noend]{algpseudocode}
\makeatletter
\def\BState{\State\hskip-\ALG@thistlm}
\makeatother

\newtheorem{hypothesis}{Hypothesis}

\usepackage{multicol}
\usepackage{pgfplots}  % Pour le TikZ
\pgfplotsset{compat=1.17}

\newcommand{\eps}{\varepsilon}

\begin{document}

% Ou soumettre
% ACM Transactions Algorithms - Woodruff, Conrado Martinez
% Probability in the Engineering and Informational Sciences - bof
% Discrete Mathematics & Theoretical Computer Science
% Algorithmica - Woodruff
% Information Processing Letters
% Journal of Applied Probability - bof
% Journal of Machine Learning Research

\maketitle
\nolinenumbers

%TODO mandatory: add short abstract of the document
\begin{abstract}
\texttt{HyperLogLog} is a now classic probabilistic algorithm that provides an approximation of the number of distinct elements in a massive dataset, using only one pass over the data. In the original article, Flajolet, Fusy, Gandouet, Meunier (2007) provided a sharp analysis of the expectation and variance of the output, using explicit formulas analyzed using poissonization and Mellin transform.

In this short article, we revisit the analysis of \texttt{HyperLogLog} with a more probabilistic viewpoint. This allows us to establish exponential deviation inequalities for the \texttt{HyperLogLog} estimator. The methods are elementary, but the estimates are non-asymptotic and totally explicit. 
\end{abstract}

\section{Introduction: context and main results}
\subsection{A brief overview of \texttt{HyperLogLog}}
\label{sec:typesetting-summary}

This work is devoted to the analysis of  \texttt{HyperLogLog}, a now classic probabilistic  \emph{streaming algorithm}. Streaming algorithms allow only one pass over the data: each input is observed, quickly processed to update internal variables and then discarded. This is a severe constraint 
and one accepts that the output is an approximation of the actual solution, provided by a summary (or \emph{sketch}) of the data stream. Most efficient data stream algorithms use randomness or pseudo-randomness in order to overcome the constraints. A concise introduction to data stream algorithms can be found in \cite[Chap.7]{erickson2023algorithms}.

An iconic example of an algorithmic problem solved by data stream algorithms is that of \emph{cardinality estimation}. The input is a multiset $\mathcal{M}$ of elements of a large set $\mathcal{D}$, and the output is an estimation of the unknown number $N$ of distinct elements in $\mathcal{M}$.
Many probabilistic data stream algorithms have been introduced for cardinality estimation; we refer to \cite{heule2013hyperloglog,harmouch2017cardinality,pettie2021information} for comparisons of several algorithms and numerical experiments.% most of them fit into two families:
%\begin{itemize}
%\item \emph{Sketch-based} algorithms which scan the entire dataset, use a hash function to obtain pseudo-random uniform and record a small \emph{sketch} of the data allowing to estimate $N$. This sketch can be based on order statistics \cite{bar2002counting,giroire2009order} or bit patterns  \cite{alon1996space,durand2003loglog,flajolet2007hyperloglog}.
%\item Algorithms using random sampling of the dataset \cite{whang1990linear}  or random projections  
%\cite{kane2010optimal}.
%\end{itemize}

Among them \texttt{HyperLogLog} is a very popular algorithm introduced by Flajolet, Fusy, Gandouet, and Meunier in \cite{flajolet2007hyperloglog} to solve the cardinality estimation problem. It is easy to implement and very effective \cite{heule2013hyperloglog}, it is implemented in many modern database systems such as \texttt{Redis} \cite{Redis} or \texttt{BigQuery} \cite{BigQuery}. The history of \texttt{HyperLogLog}, put in the context of probabilistic streaming algorithms, is nicely told in \cite{flajolet2004counting,lumbroso2018story}.
%The general principle behind \emph{min-sketch} algorithms is:
%\begin{enumerate}
%\item to hash the elements of $\mathcal{M}$ using a hash function to obtain pseudo-random uniform numbers in $[0,1]$
%\item use the statistical properties of the extremal values in a independent sample of size $N$ in order to estimate the size of $\mathcal{M}$. 
% \end{enumerate}

The original algorithm is presented in Algorithm \ref{HLL} just below, let us briefly give the general scheme. The idea, which dates back to a pioneering work by Flajolet and Martin \cite{flajolet1985probabilistic}, is to analyze certain patterns appearing in the pseudo-random binary strings $\{h(v), v\in \mathcal{M}\}$ for some hash function $h$. Let $\rho(s)\in \mathbb{Z}_{\geq 1}$ denote the position of the leftmost $1$ in the binary string $s$ (\emph{e.g.} $\rho(001011\dots)=3$). If $h$ is well designed then each pseudo-random integer $\rho(h(v))$ should resemble a geometric random variable of mean $2$. Therefore it is expected that
$$
\mathbb{P}\left(\max_{v\in \mathcal{M}} \{\rho(h(v))\} \leq k \right)\approx (1-2^{-k})^N \approx \exp(-N2^{-k}),
$$
so that the maximal value  $R:= \max_{v\in \mathcal{M}} \{\rho(h(v))\}$ is typically of order $\log_2(N)$. Then the pseudo-random variable $2^{R}$ (or a variant of it) could be used to estimate $N$. To reduce the variance, an important principle is to parallelize this estimation by searching for patterns in different locations of the strings. In practice one uses $m$ parallel registers which allow to emulate $m$ \emph{almost} independent experiments while requiring only a single hash function. This technique, called \emph{stochastic averaging}, was already used in \cite{flajolet1985probabilistic}. 

%Quoting from \cite{haas1995sampling}: "Virtually all query optimization methods in relational and
%object-relational database systems require a means of assessing the number of distinct values of an attribute in a
%relation".

\medskip

In order to formally describe the algorithm we need to choose:
\begin{itemize}
\item A \emph{hash function} $h:\mathcal{D} \to [0, 1] $; 
\item A parameter $b\in \mathbb{Z}_{\geq 0}$, and a memory composed of $m=2^b$ registers. Each register is a memory unit in which one can store a float or a large integer. 
\end{itemize}
(Early experiments \cite{flajolet2007hyperloglog,heule2013hyperloglog} suggested to use a $32$-bit or $64$-bit hash function and $m\in \{64,128,256\}$.)

\medskip

The algorithm runs as follows (see a sketch in \cref{fig:SchemaHLL} to summarize notation):
\begin{algorithm}
\caption{HyperLogLog}\label{HLL}
\begin{algorithmic}[1]
 \State\label{l:input}  \textbf{input:} multiset $\mathcal{M}$ of items from domain $\mathcal{D}$
 \State\label{l:par}  \textbf{parameters:} integer $b\geq 1$; $m \gets 2^b$
% \State\label{l:phantom}  $\phantom{\text{\textbf{parameters:}}}$ hash function  $h$
 \State\label{l:precomp}  \textbf{precomputation:} constant $\alpha_m$
\State\label{l:init} \textbf{initialize} a collection of $m$ registers $R[1]=\dots =R[m]=1$
\For{$v\in \mathcal{M}$} 
\State\label{l:x} $x \gets h(v)$ and write $x=0.x_1x_2x_3\dots$ \hfill // hashing
\State\label{l:j} $j \gets 1+\langle x_1x_2\dots x_b\rangle_2$ 	\hfill // register $1\leq j\leq m$ given by the first $b$ bits of $x$
\State\label{l:w} $w \gets 0.x_{b+1}x_{b+2}\dots $ \hfill  // extracting a fresh binary string out of $x$
\State\label{l:M} $R[j] \gets \mathrm{max}\{R[j],\rho(w)\}$ \hfill // updating the $j$-th register 
\EndFor
\State\label{l:Z} $Z_N \gets \left(\sum_{j=1}^m 2^{-R[j]}\right)^{-1}$
\State\label{l:hat_N} \textbf{return} $\hat{N}:=\alpha_m m^2\times Z_N$  \hfill  // debiasing 
\end{algorithmic}
\end{algorithm}
\begin{remark}
\begin{itemize}
\item The constant $\alpha_m$ at Line \ref{l:precomp} has an explicit integral form, and may be computed with arbitrary precision (see   \cite[Section 3.2]{flajolet2007hyperloglog}). In particular it is shown that the sequence $(\alpha_m)$ is bounded and $\alpha_m\stackrel{m\to+\infty}{\to} (2\log(2))^{-1}=0.72135\dots$. (Algorithm \ref{HLL} can actually be implemented with any integral value of $m$, not necessarily a power of two.)
\item In \cite{flajolet2007hyperloglog}, two variants are proposed for the initialization of $R[j]$'s at Line \ref{l:init}: $R[j]\leftarrow 0$ or $R[j]\leftarrow -\infty$. Here we make a slight modification and initialize $R[j]\leftarrow 1$ in order to simplify the analysis. This does not change the asymptotic behavior provided that $N \gg m\log(m)$, %(see \cref{sec:init})
 which is of course the case in practice.
\end{itemize}
\end{remark}

\begin{figure}
\begin{center}
\includegraphics[width=15cm]{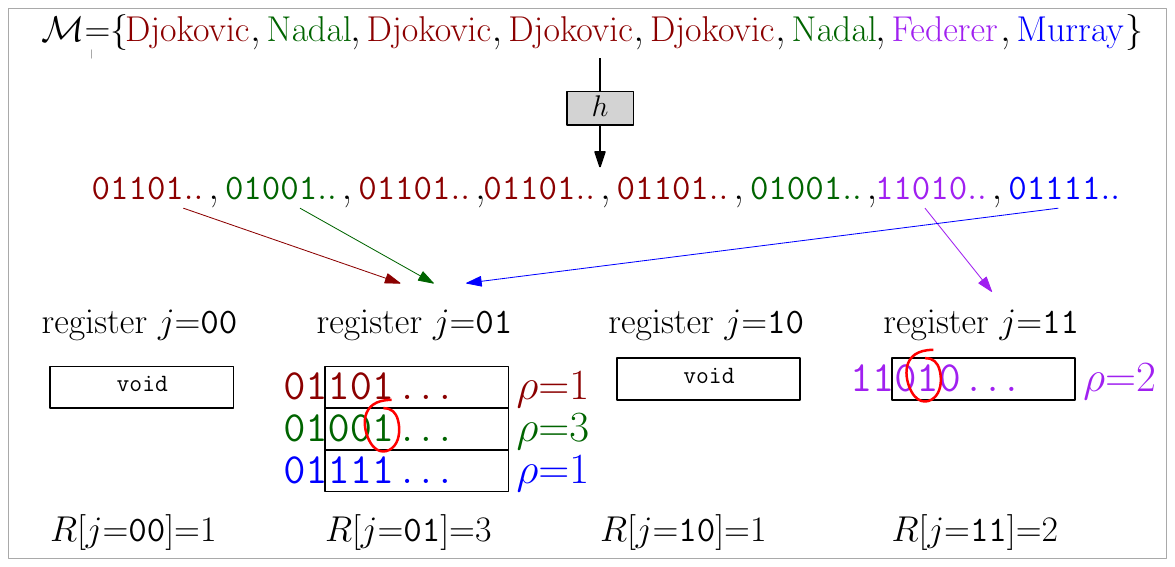}
\caption{A run of \cref{HLL} for $b=2$ (\emph{i.e.} $m=2^2$) used to estimate the number of distinct Grand Slam winners in 2011-12. For instance, $\text{Djokovic}$, $\text{Nadal}$ and $\text{Murray}$ are all sent onto the register $\mathtt{01}$ since the corresponding hashed values all three begin with $\mathtt{01}$.  The maximal leftmost $1$  in each register is surrounded in red, illustrating that in this run of \texttt{HyperLogLog} we obtain $m^2Z_N=4^4\left(2^{-1}+2^{-3}+2^{-1}+2^{-2}\right)^{-1}$.}
\label{fig:SchemaHLL}
\end{center}
\end{figure}

At this stage it is quite difficult to guess why $\hat{N}$ would be a reasonable estimation of $N$, % (apart from the case $m=1$, compare with \cref{eq:prehistoire}), 
we give an intuition below in \cref{sec:heuristic}. 
For the analysis, the authors of  \cite{flajolet2007hyperloglog} assume that the hash function $h$ is well designed. More formally they make the following mathematical assumption, very common in the literature  (see the discussion in \cite{pettie2021information}) and referred to as the \textsc{RandomOracle} model.
\begin{hypothesis}[\textsc{RandomOracle} model]
\label{hyp:IM}
Let $N$ be the number of distinct elements in $\mathcal{M}$. The set of hashed values $\{h(v), v\in \mathcal{M} \}$ (where duplicates have been removed) has the same statistical properties as a set of $N$ independent random variables uniform in $(0,1)$.
\end{hypothesis}

Throughout this paper we assume as in \cite{flajolet2007hyperloglog} that \cref{hyp:IM} holds.
For implementation we  of course have to take into account the fact that hashed values $h(v)$ are stored only up to  a finite precision. In \cite[Sec.4]{flajolet2007hyperloglog} the authors propose for large $N$ a correction which takes care of collisions between hashed values.

Here is the main result of the original \texttt{HyperLogLog} article.%\cite{flajolet2007hyperloglog}.
\begin{theorem}[Flajolet-Fusy-Gandouet-Meunier, Theorem 1 in \cite{flajolet2007hyperloglog}]\label{th:historique}\ \\
Assume \cref{hyp:IM}.\\
For all $m\geq 3 $, Algorithm \ref{HLL} is asymptotically almost unbiased. More precisely, when $N\to+\infty$,
\begin{align}
\mathbb{E}\left[\hat{N}\right]&= N(1+\delta_1(N)+\mathrm{o}(1)),\\
\sqrt{\mathrm{Variance}\left(\hat{N}\right)}&= N\left(\frac{1}{\sqrt{m}}\beta_m + \delta_2(N) + \mathrm{o}(1)\right).\label{eq:VarianceHLL}
\end{align}
where
\begin{itemize}
\item $(\beta_m)_{m\geq 3}$ is an explicit sequence converging to $\beta_\infty := \sqrt{3\log(2)-1}<1.04$,
\item $\delta_1(N),\delta_2(N)$ are oscillating functions of a tiny amplitude satisfying, whenever $m\geq 2^4$,
$$
\limsup_{N\to+\infty} |\delta_1(N)|< 5.10^{-5},\qquad \limsup_{N\to+\infty} |\delta_2(N)|< 5.10^{-4}.
$$
\end{itemize}
\end{theorem}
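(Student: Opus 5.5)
We would follow the route of Flajolet et al.: poissonize, compute moments of $Z$ under the Poisson model as integrals, evaluate them asymptotically with the Mellin transform, and finally depoissonize.

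\textbf{Poissonization.} First replace the deterministic number $N$ of distinct hashed values by a Poisson variable of mean $\lambda$. Under \cref{hyp:IM} the registers $R[1],\dots,R[m]$ then become independent and identically distributed. Each register receives a Poisson$(\lambda/m)$ number of points, so
$$\mathbb{P}(R[j]\le k)=\exp\left(-\tfrac{\lambda}{m}2^{-k}\right)\quad \text{for } k\ge 1,$$
with our initialization $R[j]\geq 1$. Write $Y=\sum_j 2^{-R[j]}$, so that $Z=Y^{-1}$. To handle the inverse we use the identities
$$\mathbb{E}[Y^{-1}]=\int_0^\infty \mathbb{E}[e^{-tY}]\,dt=\int_0^\infty \left(\mathbb{E}\,e^{-t2^{-R}}\right)^m dt,$$
and similarly $\mathbb{E}[Y^{-2}]=\int_0^\infty t\,\left(\mathbb{E}\,e^{-t2^{-R}}\right)^m dt$. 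Both reduce to explicit series in $k$ involving $e^{-\lambda 2^{-k}/m}$.

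\textbf{Mellin analysis.} Substitute $t=\lambda u$. The single-register Laplace transform $g(\lambda,u)=\sum_k \mathbb{P}(R=k)e^{-\lambda u 2^{-k}}$ is a harmonic sum in $\log_2\lambda$. Its Mellin transform has poles at $s=2i\pi\ell/\log 2$, $\ell\in\mathbb{Z}$. The pole at $\ell=0$ gives the main term, a function of $u$ alone. This yields
$$\mathbb{E}_\lambda[\hat N]=\lambda\bigl(1+\delta_1(\lambda)+o(1)\bigr),$$
where $\alpha_m$ is exactly the constant that makes the leading term equal to $\lambda$, namely $\alpha_m^{-1}=m\int_0^\infty\bigl(\log_2\frac{2+u}{1+u}\bigr)^m du$. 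The poles with $\ell\neq 0$ give the periodic fluctuations $\delta_1$. Their amplitude is controlled by $|\Gamma(2i\pi\ell/\log 2)|$-type factors, which decay like $e^{-\pi^2|\ell|/\log 2}$; this is what produces bounds of order $10^{-5}$. The same computation for the second moment gives $\beta_m$, with $\beta_m^2\to 3\log 2-1$ after expanding the integrand for large $m$ through a Laplace-method argument around $u\approx 1/m$.

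\textbf{Depoissonization and main obstacle.} Finally we depoissonize. Both $\lambda\mapsto\mathbb{E}_\lambda$ functionals extend analytically to a cone around the positive real axis with polynomial growth. Analytic depoissonization (Jacquet–Szpankowski) then transfers the asymptotics from $\lambda$ to $N$, with an $o(1)$ error. Alternatively, one can use the concentration of Poisson$(\lambda)$ within $\lambda\pm\lambda^{2/3}$ together with monotonicity of $\hat N$ in the data.

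The main obstacle is the numerical part: certifying explicitly, uniformly for $m\ge 16$, that the sum of the oscillating Mellin residues stays below $5\cdot 10^{-5}$ and $5\cdot 10^{-4}$. This requires bounding the $m$-fold powers inside the residues and summing the $\Gamma$-decay tails. We would do this by proving that the residue coefficients are dominated by $m$-independent Gamma-type bounds, then evaluating the $\ell=\pm1$ terms numerically and bounding the tail geometrically.
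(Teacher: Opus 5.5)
Your plan is correct in outline and is essentially the original Flajolet--Fusy--Gandouet--Meunier argument. The paper does not reprove this theorem: it quotes it and names as its ingredients an explicit formula for the moments, Poissonization and Mellin-transform asymptotics, which is exactly the route you take, with the uniform-in-$m$ numerical bounds on the oscillating terms, the step you flag as delicate, likewise left to the original article.
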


\begin{remark}
Let us make a few comments regarding the above result.
\begin{itemize}
\item If $m=1$ then none of the corresponding constants $\alpha_1$ nor $\beta_1$ is finite. In this case we have $Z_N=2^{R[1]}$ which has an infinite expectation under \cref{hyp:IM} (see \cref{rem:m_1}). If $m=2$ then $\alpha_2<+\infty$ but $\beta_2=+\infty$  \cite[Sec.3.2]{flajolet2007hyperloglog}.
\item The oscillation phenomenon is due to the purely discrete nature of the algorithm and is typical in the analysis of extreme values of geometric random variables (see \emph{e.g.} \cite{brands1994number}).
\item For \cref{th:historique} to be correct the registers of Algorithm \ref{HLL} need to store arbitrarily large integers $R[1],\dots,R[m]$. 
Under the assumption that $N\leq N_{\mathrm{max}}$ for some $N_{\mathrm{max}}$ the space complexity may however be restricted to $\mathcal{O}(\log\log(N_{\mathrm{max}}))$  and still returns an almost unbiased estimation of $N$  (see \cite[p.152]{flajolet2007hyperloglog} or \cite[Sec.4]{durand2003loglog}). %, or the discussion below in \cref{sec:space}) 
Hence the name of the algorithm. 
\end{itemize}
\end{remark}

%, which is quite common when analyzing the maxima of geometric random variables (see for example \cite{eisenberg1993asymptotic}).

A natural question that motivated this work is to go beyond the first two moments and study the deviations of the random variable $\hat{N}$ (or $m^2Z_n$). Quoting the authors of \texttt{LogLog}  (the older sister of \texttt{HyperLogLog}) in \cite{durand2003loglog}:
\begin{center}
\begin{minipage}{0.9\textwidth}
\emph{Consequently, the estimate  returned is very roughly Gaussian: at any rate, it has exponentially decaying tails. (In principle, a full analysis would be feasible.)}
\end{minipage}
\end{center}
%In order to supplement the estimate provided by \cref{th:historique}, we therefore seek to establish non-asymptotic deviation inequalities for $\hat{N}$ .
%\medskip
%For practical applications, it is desirable to have the best possible deviation inequalities, with explicit constants.
%
%\medskip
Another motivation came from a master’s-level course on probabilistic algorithm analysis, for which we wanted to present the principles of \texttt{HyperLogLog}. 
Unfortunately, the analysis of $\mathbb{E}[Z_N],\mathbb{E}[Z_N^2]$ are quite intricate and it is difficult to grasp the intuition of the factor $\alpha_m m^2$ in the expression of $\hat{N}$ (even just regarding the factor $m^2$).
The proof of \cref{th:historique} relies on subtle estimates using (i) an explicit (though rather complex) formula for $\mathbb{E}[\hat{N}]$ (ii) poissonization (iii) asymptotic properties of the Mellin transform. %(The Mellin transform appears to be particularly well adapted to handle the phenomenon of tiny oscillations.) 

 We hope that a more probabilistic approach addresses this issue, if only for educational purposes. 
 The results below are based solely on the Chernov method and simple concavity arguments.
 Furthermore, this work may be adaptable to other algorithms based on \emph{stochastic averaging}. We illustrate this in \cref{sec:MinCount} using \texttt{MinCount}, a continuous analogue of \texttt{HyperLogLog}.
 
%%%%%%%%%%%%%%%%%%%%%%%%%%%%%%%%%%%%
\subsection{Main result and comments}
Theoretical lower bounds for the cardinality estimation (\cite[Prop.4.1]{alon1996space}, \cite{indyk2003tight,pettie2021information}) essentially tell us that for every probabilistic data stream algorithm with output $\tilde{N}$ we have that $\mathbb{P}\left(\big|\tilde{N} -N\big|\geq \eps N\right)$ is bounded away from zero when $N\to +\infty$.

Regarding \texttt{HyperLogLog}, the Bienaym\'e-Chebyshev inequality combined with \cref{th:historique} gives the following upper bound. %\footnote{Here for simplicity we  ignore the fact that  \cref{eq:VarianceHLL} is only asymptotic.}. 
For every $\eps>0$ 
\begin{equation}\label{eq:BT}
\mathbb{P}\left(\big|\hat{N} -N\big|\geq \eps N\right)\leq
\frac{\mathrm{Var}(\hat{N})}{\eps^2N^2}\stackrel{N\text{ large}}{\approx} 
 \frac{\beta_m^2}{m}\frac{1}{\eps^2}.
\end{equation}
%(Of course the above display is relevant only if the RHS is less than $1$. To give an idea this requires that $\eps>0.2559\dots$ if $m=256$.) 
Even though the final values $R[1],\dots, R[m]$ are not truly independent, the intuition behind \emph{stochastic averaging} suggests that \eqref{eq:BT} could be improved into an upper bound exponential in $m$.
On the other hand, since $\mathrm{Variance}(\hat{N})$  is of order of $N^2$, one cannot be too demanding for small $\eps$. Simulations show for instance that for $\eps=0.1, m=128$ and $N=10^9$:
$$
\begin{array}{r l l}
&\mathbb{P}\left(\hat{N} \leq 0.9N\right) \approx 13\%, & \mathbb{P}\left(\hat{N} \geq 1.1N\right) \approx 14\%.\\
%m=256: &\mathbb{P}\left(\hat{N} \leq 0.9N\right) \approx 5\%, & \mathbb{P}\left(\hat{N} \geq 1.1N\right) \approx 7\%.
\end{array}
$$
We now state our main result. To this purpose we introduce for $x>0$ the \emph{rate} function
\begin{minipage}[c][6cm][c]{0.48\textwidth}
\centering
\begin{tikzpicture}
\begin{axis}[
    axis lines = middle,
    xlabel = $x$,
    ylabel = {$\mathcal{J}(x)$},
    domain=0.01:2,
    samples=200,
    ymin=-0.5, ymax=2,
    xmin=0, xmax=2,
    grid=both,
    width=6cm
]
\addplot[blue, thick] {1/x - 1 + ln(x)};
\end{axis}
\end{tikzpicture}
\end{minipage}
\hfill
\begin{minipage}[c][6cm][c]{0.48\textwidth}
\begin{equation}\label{eq:Rate}
\mathcal{J}(x)=1/x-1+\log(x).%\tag{\text{\ref{eq:main-left}}}
%\mathcal{I}(x)=x-1-\log(x).%\tag{\text{\ref{eq:main-left}}}
\end{equation}
\end{minipage}
Note that $\mathcal{J}(x)> 0$ for all $x\neq 1$.

\begin{theorem}\label{th:main}\ \\
Assume \cref{hyp:IM}. For all $b\geq 0$, $m=2^b$ the output of \emph{\texttt{HyperLogLog}} satisfies the following estimates for every $N\geq 1$.
\begin{itemize}
\item {\bf Left-tail.} For all $\mu \leq 1$
\begin{equation}\label{eq:main-left}
\mathbb{P}\left(m^2 Z_N \leq \mu N\right)\leq  \exp\left(-m\mathcal{J}(\mu) \right).
\end{equation}
\item {\bf Right-tail.} For all $\lambda \geq 2$
\begin{equation}\label{eq:main-right}
\mathbb{P}\left(m^2 Z_N \geq \lambda N\right)\leq \exp\left(-m\mathcal{J}(\lambda/2 ) +\delta(m,N,2/\lambda) \right),
\end{equation}
where $ \delta(m,N,c)$ is an explicit function given in \cref{eq:delta} and satisfies $ \delta(m,N,c)\leq (c+e^{-2})m^2/N$ for all $m,N \geq 1$, $ c\in(0,1)$.
\end{itemize}
\end{theorem}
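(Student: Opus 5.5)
The plan is to compare the registers with continuous minima and reduce both tails to Chernoff bounds for a Gamma variable. The key elementary observation is this: if $w\in(0,1)$ has $\rho(w)=k$, then $w\in[2^{-k},2^{-k+1})$, so
$$\tfrac12 w< 2^{-\rho(w)}\leq w .$$
For register $j$, let $M_j$ be the minimum of the values $w=mx-(j-1)$ over the hashed values $x$ falling in $[(j-1)/m,j/m)$. Set $M_j=1$ if the bucket is empty, and put $\tilde M_j:=\min(M_j,1)$; the truncation at $1$ accounts for the initialization $R[j]=1$, which gives $2^{-R[j]}=\min\bigl(\tfrac12,\min 2^{-\rho(w)}\bigr)$. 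Taking minima in the display then gives
$$\tfrac12\sum_j \tilde M_j\;\le\;\sum_j 2^{-R[j]}\;\le\;\sum_j \tilde M_j .$$
(With $\tilde M_j\le 1$, the factor-$\tfrac12$ lower bound is $\min(\tfrac12, \cdot)$ up to a harmless constant; this is where the initialization convention simplifies things.) Since $m^2Z_N\le \mu N$ iff $\sum_j 2^{-R[j]}\ge m^2/(\mu N)$, the left tail is bounded by $\mathbb P\bigl(\sum_j\tilde M_j\ge m^2/(\mu N)\bigr)$. Likewise the right tail is bounded by $\mathbb P\bigl(\sum_j \tilde M_j\le c\, m^2/N\bigr)$ with $c=2/\lambda$. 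This factor $2$ is exactly why $\lambda/2$ appears in \eqref{eq:main-right}.

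Under \cref{hyp:IM}, the marginal law of each $\tilde M_j$ is explicit: $\mathbb P(\tilde M_j>s)=(1-s/m)^N$ for $s<1$, and $0$ for $s\ge 1$. Hence $\tilde M_j$ is stochastically dominated by an exponential variable $E_j$ of mean $m/N$. The $\tilde M_j$ are not independent. However, they are coordinatewise increasing functions of the sample restricted to disjoint buckets, and I would invoke negative association of the occupancy/multinomial structure (Joag-Dev--Proschan). This gives, for $\theta\ge0$,
$$\mathbb E\bigl[e^{\pm\theta\sum_j \tilde M_j}\bigr]\le\prod_j\mathbb E\bigl[e^{\pm\theta \tilde M_j}\bigr].$$
For the $+$ sign, domination by $E_j$ bounds each factor by $(1-\theta m/N)^{-1}$. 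Thus $\sum_j\tilde M_j$ is controlled by a $\mathrm{Gamma}(m)$ variable with mean $m^2/N$. The standard Chernoff bound $\mathbb P(\Gamma\ge a\,\mathbb E\Gamma)\le \exp(-m(a-1-\log a))$, with $a=1/\mu$, yields exactly $\exp(-m\mathcal J(\mu))$. This proves \eqref{eq:main-left}.

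For the right tail we need the lower-deviation Chernoff bound, $\mathbb P(\Gamma\le c\,\mathbb E\Gamma)\le\exp(-m(c-1-\log c))=\exp(-m\mathcal J(1/c))$. With $c=2/\lambda$ this is $\exp(-m\mathcal J(\lambda/2))$. This is where I expect the main obstacle: domination now goes the wrong way, since $\tilde M_j$ is smaller than $E_j$. I therefore have to bound $\mathbb E[e^{-\theta\tilde M_j}]$ directly, at the optimal $\theta=(1/c-1)N/m$. I would write
$$\mathbb E[e^{-\theta \tilde M_j}]=1-\theta\int_0^1 e^{-\theta s}(1-s/m)^N\,ds$$
and compare it with the exponential value $\bigl(1+\theta m/N\bigr)^{-1}$. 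Two error sources appear. The first is $(1-s/m)^N\ge e^{-sN/m-s^2N/m^2}$ on the relevant range; this contributes an error of order $\theta\, m^3/N^2\approx c\,m/N$ per register. The second is the truncation at $s=1$; this contributes a term like $e^{-\theta}\le e^{-N/m}$ per register, which I would bound crudely by $e^{-2}m/N$ when $N\ge m$. (When $N<m^2$ the claimed bound on $\delta$ exceeds $1$ in the exponent and the statement is trivial anyway.) Taking logarithms and summing over the $m$ registers, these errors define $\delta(m,N,c)$, and they should sum to at most $(c+e^{-2})m^2/N$. The delicate point is making these estimates uniform in $c\in(0,1)$, since $\theta$ blows up as $c\to 0$; I would handle that by keeping the exact integral and bounding the ratio to the Gamma transform rather than linearizing.
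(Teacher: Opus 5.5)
Your reduction and your left tail are correct, by a different route from the paper's. The reduction is the sandwich $\tfrac12\sum_j\tilde M_j\le\sum_j2^{-R[j]}\le\sum_j\tilde M_j$, where $\tilde M_j$ is just the paper's $Y_j$. For the left tail you decouple the registers by negative association, after which you only need the marginal domination $\mathbb P(\tilde M_j>s)=(1-s/m)^N\le e^{-sN/m}$. The paper instead computes the joint survival function $\mathbb P(Y_j>t_j\ \forall j)=(1-\sum_jt_j/m)^N\le e^{-\frac Nm\sum_jt_j}$, which is an upper-orthant comparison with $\frac mN(\mathcal E_1,\dots,\mathcal E_m)$, and then quotes a Laplace-transform comparison theorem. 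Your NA claim needs slightly more than NA of the counts, because the minima depend on positions too. The clean justification is to condition on $(A_1,\dots,A_m)$. This gives $\mathbb E[\prod_jh(\tilde M_j)]=\mathbb E[\prod_j\phi(A_j)]$ with $\phi(k)=\mathbb E[h(\min\text{ of }k\text{ uniforms})]$ monotone in $k$, and NA of the multinomial vector then applies with $h(y)=e^{\pm\theta y}$.

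The genuine gap is in the right tail.

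\emph{The triviality claim is false.} A correction $\delta\ge1$ does not make $-m\mathcal J(1/c)+\delta\ge0$. For $m=128$, $N=10^4$, $c=1/2$, one has $(c+e^{-2})m^2/N\approx1.04$ but $m\mathcal J(2)\approx24.7$, so the asserted bound is below $e^{-23}$. Small $N$ therefore has to be handled, and your estimates do not handle it.

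\emph{The truncation estimate does not close for $N<m$.} Relative to $c=(1+\theta m/N)^{-1}$, the truncation at $s=1$ contributes exactly $(1/c-1)e^{-N/(mc)}$ per register. It is not $e^{-\theta}$, and in any case $e^{-\theta}\le e^{-N/m}$ needs $c\le1/2$. Your step ``$e^{-N/m}\le e^{-2}m/N$ for $N\ge m$'' already fails at $N=m$. With the exact term one gets $\frac Nm(1/c-1)e^{-N/(mc)}\le e^{-1-N/m}$, which is $\le e^{-2}$ only when $N\ge m$. For $N<m$ and $c=N/(N+m)$ equality holds and $e^{-1-N/m}>e^{-2}$. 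So the linearized truncation error exceeds the share $e^{-2}m/N$ you allot to it. The uniformity in $c$ is also only asserted, as is the case $m=1$, where $\log(1-x)\ge-x-x^2$ fails near $x=1$.

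\emph{How the paper avoids this.} It bounds the conditional transform by $F(k,\lambda)=\frac{k+e^{-\lambda}(\lambda-1)}{k+\lambda-1}$, whose term $e^{-\lambda}(\lambda-1)$ accounts exactly for empty registers. Since $\log F(\cdot,\lambda)$ is concave and $\sum_jA_j=N$ is deterministic, Jensen gives $\prod_jF(A_j,\lambda)\le F(N/m,\lambda)^m$ surely, with no negative dependence needed. In your framework, concavity of $F$ gives $\mathbb EF(A_j,\lambda)\le F(N/m,\lambda)$ just as well. The paper then takes the shifted parameter $\lambda=\frac Nm(1/c-1)+1$. The shift produces the term $cm^2/N$, and the empty-register term becomes $m(1/c-1)e^{-\frac Nm(1/c-1)-1}\le e^{-2}m^2/N$ for every $N\ge1$, by $xe^{-x-1}\le e^{-2}$. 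This yields exactly the paper's $\delta(m,N,c)$ for all $N$.
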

The proofs of both bounds rely on the Chernov method applied to some random variables $Y_1,\dots,Y_m$ related to $R[1],\dots, R[m]$ and introduced below in \cref{prop:LoiYj}. Of course the difficulty is that $R[1],\dots, R[m]$ are not independent.
\begin{remark} Several comments are in order.
\begin{enumerate}
\item We emphasize that \cref{eq:main-left,eq:main-right} are non-asymptotic: they hold for every $N\geq 1$. Observe however that these bounds are  meaningless for very small $N$. In the degenerate cases where $N\ll m$ then the LHS of \cref{eq:main-left} is zero while the RHS of \cref{eq:main-right} is larger than $1$.
\item 
Both bounds \cref{eq:main-right,eq:main-left} are similar as they are exponentially small in $m$ and involve the rate function $\mathcal{J}$. However we observe two important differences between the two expressions:
\begin{itemize} 
\item For the right-tail we need the correction $\delta$ for small values of $N$. To understand why this is necessary, consider the extreme case where $N=o(m)$. Then at most $N=o(m)$ registers $R[j]$'s have been updated at the end of the algorithm and $Z_N=\frac{1}{m-o(m)}$. Hence 
$m^2Z_N =\Omega(m)$ and the event $\{m^2 Z_N \geq \lambda N\}$ which arises on the LHS of \cref{eq:main-right} occurs certainly. %On the opposite, the event $\{m^2 Z_N \leq \mu N\}$ on the LHS of  \cref{eq:main-left} occurs with a very small probability, even for small values of $N$.\\
Hence, a concentration inequality like \cref{eq:main-right} cannot hold without a correction term for small $N$'s.
\item Unfortunately, the right-tail inequality holds only for $\lambda \geq 2$. This is due to a necessary manipulation to eliminate the integer parts (see \cref{eq:EncadrementZ} below). We discuss in \cref{sec:vraie_laplace}  how one could lower the constant.
\end{itemize}
\item %We also point out that the statement holds even for $m\in\{1,2\}$. 
Even if the analysis of the original \texttt{HyperLogLog} requires $m\geq 3$,  \cref{th:main}  shows that it still makes sense to estimate $N$ with $m^2Z_N$ in the extreme cases $m=1$, $m=2$. To give an idea one gets for $m=2$ and $N=10^9$ that 
$$
\mathbb{P}\left( 2^2Z_N \leq \frac{1}{10} N\right)<  1.53\times 10^{-6},\qquad
\mathbb{P}\left( 2^2Z_N \geq 10 N\right) <  0.1982,
$$
so $2^2Z_N$ is a reasonable estimation of the order of magnitude of $N$.%We claimed that $\mathbb{E}[Z_N]=+\infty$ if $m=1$, which may seem contradictory at first glance with \cref{eq:main-right} as it appears to suggest that $Z_N$ has an exponential tail. However, 
\item \cref{eq:main-right}  implies that for every fixed $m,N$ and a large $t>0$, 
$$
\mathbb{P}\left(\hat{N} \geq t\right)\leq \frac{C_{m,N}}{t^{m}}%C_{m,N} \left(\frac{2N}{t}\right)^m
$$
for some constant $C_{m,N}>0$ not depending on $t$. This shows that $\hat{N}$ has a finite $p$-th moment for all $p<m$. %(Compare with \cref{th:historique}: $\mathbb{E}[Z_N]=+\infty$ if $m=1$ and $\mathbb{E}[Z_N^2]=+\infty$ if $m=2$.)
\end{enumerate}
\end{remark}

Let us now compare \cref{th:main} with \cref{eq:BT} provided by \cite{flajolet2007hyperloglog}. 
If we apply \cref{th:main} we obtain, recalling that $\hat{N}=\alpha_m m^2Z_N$:
\begin{itemize}
\item For $\eps >1-\alpha_m$ ($\approx 0.27\dots $ for large $m$)
\begin{align}
\mathbb{P}\left(\hat{N} \leq(1-\eps) N\right)%&= \mathbb{P}\left(m^2Z_N \leq \frac{1-\eps}{\alpha_m} N\right)\notag\\
&\leq \exp\left(-m\mathcal{J}((1-\eps)/\alpha_m) \right).\label{eq:left_E}
\end{align}
\item For $\eps>2\alpha_m -1$ ($\approx 0.44\dots$ for large $m$)
\begin{align}
\mathbb{P}\left(\hat{N} \geq (1+\eps) N\right)%&=  \mathbb{P}\left(m^2Z_N \geq \frac{1+\eps}{\alpha_m} N\right)\notag\\
&\leq \exp\left(-m\mathcal{J}((1+\eps)/2\alpha_m) +\delta(m,N,2\alpha_m/(1+\eps)) \right).\label{eq:right_E}
\end{align}
\end{itemize}
To give an idea, here is  the result for  $m=128$ and $N=10^9$ if we want to estimate $N$ to within a factor $2$ (above or below):
\begin{center}
\begin{tabular}{|c|c | c |}
\hline
& eq.\eqref{eq:BT}: Bienaym\'e-Chebyshev  + \cite{flajolet2007hyperloglog}& eq.\eqref{eq:right_E} and \eqref{eq:left_E} \\
\hline \hline
$\mathbb{P}\left(\hat{N} \geq 2N\right)$ & $ \leq 0.0086$ &  $\leq 0.0016$\\
$\mathbb{P}\left(\hat{N} \leq N/2\right)$ & $ \leq 0.0342$ &  $\leq 9.35\times 10^{-5}$\\
\hline
\end{tabular}
\end{center}

\subsection{Related probabilistic results}

When a probabilistic data stream algorithm can be executed in parallel and independently (this is for example the case of an important algorithm by Kane, Nelson and  Woodruﬀ \cite{kane2010optimal}), it is clear that the deviation probabilities decrease exponentially. For algorithms based on \emph{stochastic averaging} the analysis is more delicate because of dependence, we mention some results related to the probabilistic analysis of \texttt{HyperLogLog} and some of its variants.

Beyond the estimation of the two first moments, the analysis provided in \cite{flajolet2007hyperloglog} already suggested that for large $N$ the distribution of $\hat{N}/N$ is close to a Gamma random variable, paving the way for the study of exponential deviations. Shortly after, Lumbroso \cite{lumbroso2010optimal} indeed proposed a continuous variant of \texttt{HyperLogLog} and proved the convergence of the renormalized output to a Gamma random variable. {\L}ukasiewicz and Uzna{\'n}ski \cite{lukasiewicz2022cardinality} proposed another continuous variant of \texttt{HyperLogLog} in which the sketch of the data takes the form of independent Gumbel random variables, thereby greatly simplifying the probabilistic analysis.

We also mention a related paper by Clifford and Cosma \cite{clifford2012statistical} who analyzed
several families of probabilistic data stream algorithms with  the viewpoint of asymptotic statistics and information theory. Besides, a fairly detailed probabilistic analysis of the correcting terms in \texttt{HyperLogLog} for large $\hat{N}$ is provided in \cite{ertl2017new}.

%%%%%%%%%%%%%%%%%%%%%%%%%%%%%%%%%%%%
\section{Preliminaries and heuristic}
\label{sec:heuristic}

In this section, we introduce some useful notation and observations for the proof of the main theorem. We first state the following consequence of \cref{hyp:IM} (it is essentially equivalent to \cite[Prop.1]{flajolet2007hyperloglog}).
\begin{proposition}\label{prop:LoiYj}
Assume \cref{hyp:IM}. \\
Let $N\geq 1, b\geq 1, m=2^b$. Let $(A_1,\dots, A_m)$ be distributed as a multinomial distribution with parameters $(N; \tfrac1m,\dots \tfrac1m)$. Conditionally to $A_1=a_1,\dots, A_m=a_m$ let $Y_1,\dots ,Y_m$ be independent and drawn as follows:
\begin{itemize}
\item if $a_j=0$ then $Y_j=1$;
\item if $a_j>0$ then $Y_j$ has density $a_j(1-y)^{a_j -1}$ over the interval $(0,1)$.
\end{itemize}
Then the variable $Z_N$ defined in Line \ref{l:Z} of  Algorithm \ref{HLL} satisfies
\begin{equation}\label{eq:Z_proba}
Z_N \stackrel{\text{\emph{(d)}}}{=} \frac{1}{\sum_{j=1}^m 2^{-\rho (Y_j) }}.%=\frac{1}{\sum_{j=1}^m 2^{\lfloor \log_2(Y_j)\rfloor }}.
\end{equation}
\end{proposition}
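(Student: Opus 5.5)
We need to prove Proposition \ref{prop:LoiYj}. The plan is to describe directly the joint law of the final register values $R[1],\dots,R[m]$ under \cref{hyp:IM} and to match it with the law of $(\rho(Y_1),\dots,\rho(Y_m))$.

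\textbf{Step 1: one uniform variable.} By \cref{hyp:IM} we may replace the $N$ distinct hashed values by i.i.d.\ uniform variables $U_1,\dots,U_N$ on $(0,1)$. Write $U_i = 0.x_1x_2\dots$ in binary. The first $b$ bits and the remaining bits are independent. The index $j_i=1+\langle x_1\dots x_b\rangle_2$ is uniform on $\{1,\dots,m\}$, and the fractional part $W_i = 2^bU_i - \lfloor 2^b U_i\rfloor$ is uniform on $(0,1)$ and independent of $j_i$. (Dyadic rationals have probability zero, so binary expansions are a.s.\ unique.) Hence the pairs $(j_i,W_i)_{i\le N}$ are i.i.d., with $j_i$ uniform and independent of $W_i\sim\mathrm{Unif}(0,1)$.

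\textbf{Step 2: group by register.} Set $A_j=\#\{i: j_i=j\}$. Then $(A_1,\dots,A_m)$ is multinomial with parameters $(N;\tfrac1m,\dots,\tfrac1m)$. Conditionally on the $A_j$'s, the families $\{W_i : j_i=j\}$, $j=1,\dots,m$, are independent and each consists of $A_j$ i.i.d.\ uniforms. This holds because the $W$'s are independent of the $j$'s, so conditioning on the labels leaves them i.i.d.\ uniform.

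\textbf{Step 3: the key monotonicity.} For $w=0.w_1w_2\dots$ we have $\rho(w)=k$ iff $w\in[2^{-k},2^{-k+1})$. Hence $\rho$ is non-increasing in $w$ and $\rho(w)=\lceil -\log_2 w\rceil$ (a.s.). Consequently the final register value is
\[
R[j]=\max\Big(1,\max_{i:j_i=j}\rho(W_i)\Big)=\rho\Big(\min_{i:j_i=j} W_i\Big)
\]
when $A_j>0$, since $\rho\geq 1$ always. When $A_j=0$ we have $R[j]=1=\rho(1)$, with the convention $\rho(1)=1$, which holds because $1=0.111\dots$ has leftmost $1$ in position $1$. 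The minimum of $a$ i.i.d.\ uniforms has density $a(1-y)^{a-1}$ on $(0,1)$. So setting $Y_j=\min_{i:j_i=j}W_i$, or $Y_j=1$ if $A_j=0$, produces exactly the variables described in the statement, with the right conditional independence. Then $Z_N=(\sum_j 2^{-R[j]})^{-1}=(\sum_j 2^{-\rho(Y_j)})^{-1}$, which is in fact an almost sure identity on this coupling and hence gives \eqref{eq:Z_proba}.

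\textbf{Expected obstacle.} There is no real obstacle. The only points needing care are the independence of the register index from the suffix $W$ in Step 1, and the boundary conventions: the initialization $R[j]=1$ and the a.s.\ irrelevance of dyadic values. I would handle these explicitly by noting that the map $u\mapsto(\lfloor 2^bu\rfloor,\,2^bu-\lfloor 2^bu\rfloor)$ pushes $\mathrm{Unif}(0,1)$ forward to the product of the uniform law on $\{0,\dots,m-1\}$ and $\mathrm{Unif}(0,1)$.
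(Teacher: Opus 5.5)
Your proof is correct and follows the same route as the paper. Both represent the hashed values as i.i.d.\ uniforms, split each into a register index and a fractional part $\{mX_k\}$, and obtain multinomial register counts; both then use that $\rho$ is non-increasing, so $R[j]=\rho(\min_{k\in\mathcal{A}_j}W_k)$, with the convention $Y_j=1=\rho(1)$ for empty registers. Your explicit check that the index $\lfloor 2^bU\rfloor$ is independent of the suffix $2^bU-\lfloor 2^bU\rfloor$ is needed for the conditional independence and the density $a_j(1-y)^{a_j-1}$; the paper only uses this implicitly, so your version is slightly more careful on that point.
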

\begin{proof}[Proof of \cref{prop:LoiYj}]
Under \cref{hyp:IM} there exist $N$ i.i.d. 
random variables $X_1,\dots, X_N$ uniform  in $(0,1)$ such that
$$
\{h(v), v\in \mathcal{M} \}\stackrel{\text{\emph{(d)}}}{=}\{X_1,\dots,X_N\}.
$$
Successive runs of Line \ref{l:w} return independent variables $W_k:=\{m X_k\}$ (where $\{\cdot\}$ stands for the fractional part) which are also independent and uniformly distributed in $(0,1)$.
% Argument loi géométrique
%, and $\rho(W_k)=- \lfloor \log_2(W_k)\rfloor$ is a positive geometric random variable with parameter $1/2$, \emph{i.e.} $\mathbb{P}(\rho(W_k)=k)=2^{-k}$ for all $k\geq 1$. %Variables $W_k$'s are also independent.

For $1\leq j\leq 2^b$  the random variable  $\lceil m X_k\rceil$ is uniform in $\{1,\dots,m\}$ and corresponds to the register associated to $X_k$. Set
$$
\mathcal{A}_j=\{X_k \in \mathbf{X}\text{ such that }\lceil m X_k\rceil =j\}.
$$
Then $A_j=|\mathcal{A}_j|$ denotes the number of hashed values lying in the $j$-th interval $\mathcal{A}_j$. By uniformity and independence of $W_k$'s,
$$
(A_1,\dots, A_m)\stackrel{\text{(d)}}{=} \text{Multinomial distribution with parameters }\left(N; (\tfrac1m,\dots \tfrac1m)\right).
$$
For convenience we identify a real $s\in [0,1]\equiv \{0,1\}^\mathbb{Z}$ with its proper binary decomposition, so that $\rho(y)=-\lfloor \log_2(y)\rfloor$ for $y<1$ and $\rho(1)=1$. Since $\rho(\cdot)$ is non-increasing, each $R[j]$ in Algorithm \ref{HLL} can be written as
$$
R[j]=\max\{\rho(W_k); W_k\in \mathcal{A}_j \}=\rho(Y_j),
$$
where% (recall that $\rho(s)=-\lfloor \log_2(s)\rfloor$, which is non-increasing),%$\{.\}$ stands for the fractional part and
\begin{equation}\label{eq:Yj}
Y_j=\min \{W_k,\ W_k\in \mathcal{A}_j \}.
\end{equation}
If $\mathcal{A}_j$ is empty we use in \cref{eq:Yj} the convention that $Y_j=1$. (This is consistent with our initialization $R[1]=\dots =R[j]=1=\rho(1)$.) Observe that conditionally to $|\mathcal{A}_j|=a_j>0$ then $Y_j$ is the minimum of $a_j$ uniform i.i.d. random variables and therefore has density $a_j(1-y)^{a_j -1}$. With these notation, $Z_N = \left(\sum_{j=1}^m 2^{-R[j] }\right)^{-1} \stackrel{\text{\emph{(d)}}}{=}  \left(\sum_{j=1}^m 2^{-\rho (Y_j) }\right)^{-1}$.
\end{proof}

In order to ease the analysis of the random variable $Z_N$ we may get rid of integer parts in \cref{eq:Z_proba} by using
$y/2\leq 2^{-\rho(y) }\leq y$ and write
\begin{equation}\label{eq:EncadrementZ}
 \frac{1}{\sum_{j=1}^m Y_j }  \leq Z_N\leq \frac{2}{\sum_{j=1}^m Y_j }.
\end{equation}
\begin{remark}\label{rem:m_1}
In the particular case $m=1$ we have with our notation $Z_N=2^{R[1]} \geq 1/Y_1$. Hence $\mathbb{E}[Z_N]\geq \int_{0}^1\frac{1}{y}N(1-y)^{N-1}dy=+\infty$, as mentioned earlier.
\end{remark}

Before going into the proof we give a quick heuristic explanation of why \texttt{HyperLogLog} returns a reasonable estimate for $N$, starting from \cref{eq:EncadrementZ}.

It is well known that for large $k$ the minimal value among $k$ i.i.d. uniform random variables in $(0,1)$ has a distribution close to that of $\frac{1}{k}\mathcal{E}$, where $\mathcal{E}$ is exponentially distributed with mean $1$. Here, 
as $Y_j$ is the minimal value among $|\mathcal{A}_j|\approx N2^{-b}$ uniform r.v. we should have that
$$
Y_j \stackrel{\text{(d)}}{\approx} \frac{1}{N2^{-b}}\mathcal{E}_j=\frac{m}{N}\mathcal{E}_j,
$$
where $\mathcal{E}_j$'s are exponential r.v. with mean $1$, and we expect
$$
Z_N \stackrel{\text{(d)}}{\approx}   \kappa \frac{1}{mN^{-1}(\mathcal{E}_1+\mathcal{E}_2+\dots +\mathcal{E}_m) }, 
$$
for some constant $1\leq \kappa \leq 2$.

Besides the intuition is that $\mathcal{E}_j$'s are almost independent if $N\gg m$. Since a sum of independent exponential random variables is distributed as a Gamma random variable we have 
 that  $\mathcal{E}_1+\mathcal{E}_2+\dots +\mathcal{E}_m \stackrel{\text{(d)}}{\approx} \Gamma(m,1)$. For large $m$, a $\Gamma(m,1)$-distributed random variable is fairly concentrated around $m$ and therefore the random variable $Z_N$ is concentrated around (some constant $\kappa$)$\times  N/m^2$. 
 This explains why $m^2Z_N$ is a good estimator of $N$, except that it may overestimate $N$ by a factor of $\kappa > 1$.

Beyond the ingenuity of the algorithm, a large portion of \cite{flajolet2007hyperloglog} is the  analytical \emph{tour-de-force} that makes it possible to correct the bias of $m^2Z_N$ by analyzing the constants
 involved  in the asymptotics  of $Z_N$.

\section{Proof of \cref{th:main}: deviation inequalities for the $Y_j$'s} 

Theorem \ref{th:main} follows from the following bounds for $\sum_j Y_j$, where $(Y_1,\dots,Y_m)$ is the random vector defined in \cref{prop:LoiYj}. %Put 
%$$
%\mathcal{I}(x)=x-1-\log(x)=\mathcal{J}(1/x),
%$$
Recall that $\mathcal{J}(x)=1/x-1+\log(x)$ was defined in \cref{eq:Rate}.
\begin{proposition}\label{prop:tail}
For every integers $N\geq 1$, $m\geq 1$ and every $0< c<1<C$ we have the following deviation inequalities:
 \begin{itemize}
\item \textbf{right-tail} 
\begin{equation}\label{eq:right}
\mathbb{P}\left(\sum_{j=1}^m Y_j\geq \frac{Cm^2}{N}\right)
%\leq \exp\left(-m\left(C-1+\log(C)\right)\right)
\leq \exp\left(-m\mathcal{J}(1/C)\right).
\end{equation}
\item \textbf{left-tail}
\begin{equation}\label{eq:left}
\mathbb{P}\left(\sum_{j=1}^m Y_j\leq \frac{cm^2}{N}\right)
%\exp\left( -m(c-1 +\log(1/c) \right)\times \exp( (1-c) \frac{m^2}{N} )
\leq \exp\left( -m\mathcal{J}(1/c) +  \delta(m,N,c)\right) ,
\end{equation}
where 
\begin{equation}\label{eq:delta}
 \delta(m,N,c)=c\frac{m^2}{N}+ m (1/c-1)e^{-N/m(1/c-1)-1}.
 \end{equation}
\end{itemize}
\end{proposition}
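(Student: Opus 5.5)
The plan is to use the Chernov method on $S:=\sum_{j=1}^m Y_j$, after first removing the dependence between the $Y_j$'s so that the Laplace transform of $S$ factorizes. Write $\phi_\theta(a):=\mathbb{E}[e^{\theta Y}\mid A=a]$, where $Y$ is the minimum of $a$ i.i.d. uniforms, with $Y=1$ if $a=0$. By \cref{prop:LoiYj} the $Y_j$'s are conditionally independent given $(A_1,\dots,A_m)$, so
$$\mathbb{E}\big[e^{\theta S}\big]=\mathbb{E}\Big[\prod_{j=1}^m \phi_\theta(A_j)\Big].$$
For $\theta>0$ each $\phi_\theta$ is nonnegative and non-increasing in $a$, because $Y$ is stochastically decreasing in $a$. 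For $\theta<0$ each $\phi_\theta$ is non-decreasing in $a$.

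To decouple, I will invoke negative association of the multinomial vector $(A_1,\dots,A_m)$ (Joag-Dev--Proschan). It gives $\mathbb{E}[\prod_j f_j(A_j)]\le \prod_j \mathbb{E}[f_j(A_j)]$ whenever all $f_j$ are nonnegative and monotone in the same direction. Hence, for either sign of $\theta$,
$$\mathbb{E}\big[e^{\theta S}\big]\le \big(\mathbb{E}[e^{\theta Y_1}]\big)^m .$$
It then remains to study one marginal. Unconditioning, $Y_1$ is the minimum over all $N$ points of $mX_k-(j-1)$ restricted to those landing in the first block, capped at $1$. This gives the explicit tail
$$\mathbb{P}(Y_1>y)=(1-y/m)^N,\qquad 0\le y<1,$$
and $Y_1\le 1$ always.

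\textbf{Right tail.} Since $(1-y/m)^N\le e^{-Ny/m}$, the variable $Y_1$ is stochastically dominated by $\frac mN\mathcal{E}$ with $\mathcal{E}$ a standard exponential. Therefore $\mathbb{E}[e^{\theta Y_1}]\le (1-\theta m/N)^{-1}$ for $0<\theta<N/m$. Markov's inequality gives
$$\mathbb{P}(S\ge Cm^2/N)\le e^{-\theta Cm^2/N}(1-\theta m/N)^{-m}.$$
Choosing $\theta=\frac Nm(1-1/C)$ makes this equal to $\exp(-m(C-1-\log C))=\exp(-m\mathcal{J}(1/C))$, which is \eqref{eq:right}.

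\textbf{Left tail.} For $\theta>0$ I will use
$$\mathbb{E}[e^{-\theta Y_1}]=1-\theta\int_0^1 e^{-\theta y}(1-y/m)^N\,dy,$$
with $\theta=\frac Nm(1/c-1)$. Markov then gives $\mathbb{P}(S\le cm^2/N)\le e^{\theta cm^2/N}\,(\mathbb{E}e^{-\theta Y_1})^m$. The main term should reproduce $e^{-m\mathcal{J}(1/c)}$, exactly as in the exponential case, where $\mathbb{E}e^{-\theta\frac mN\mathcal{E}}=c$.

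The obstacle is controlling the error in comparing $(1-y/m)^N$ with $e^{-Ny/m}$, and the truncation of the integral at $y=1$. Here the dominance goes the wrong way. I will try the lower bound $(1-y/m)^N\ge e^{-Ny/m-Ny^2/m^2}$ for $y\le 1\le m/2$, or a cruder multiplicative correction, and hope it produces the $c\,m^2/N$ term in $\delta$. The missing mass $\int_1^\infty \theta e^{-\theta y}e^{-Ny/m}dy=\frac{\theta}{\theta+N/m}e^{-(\theta+N/m)}$ should then be bounded using $1+x\le e^x$ and $x e^{-x}\le e^{-1}$-type inequalities. Summed over the $m$ factors, this should give the term $m(1/c-1)e^{-\frac Nm(1/c-1)-1}$. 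I expect some fiddling with the exact constants here to match \eqref{eq:delta}.
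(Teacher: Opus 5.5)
Your decoupling step is valid and differs from the paper. Negative association of the multinomial vector (Joag-Dev--Proschan), combined with the monotonicity of $a\mapsto\mathbb{E}[e^{\theta Y}\mid A=a]$, gives $\mathbb{E}[e^{\theta S}]\le(\mathbb{E}[e^{\theta Y_1}])^m$ for both signs of $\theta$, so one tool handles both tails. The paper instead uses two tools: for the right tail, the upper orthant comparison of \cref{lem:dominationY} together with \cite[Th.6.G.1]{shaked2007stochastic}; for the left tail, log-concavity of the conditional Laplace bound together with the deterministic constraint $\sum_j A_j=N$. Your right-tail argument is complete and correct.

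The left tail, however, is not proved, and the route you sketch cannot give \eqref{eq:left} as stated. First, the bound $(1-y/m)^N\ge e^{-Ny/m-Ny^2/m^2}$ requires $m\ge2$. Second, the truncation term $\frac{\theta}{\theta+N/m}e^{-\theta-N/m}$ is dominated by the $e^{-\theta-1}$ in $\delta$ only when $N\ge m$. (For $m\ge2$ and $N\ge m$ your sketch does close, since the quadratic correction costs at most $2c(1-c)m/N\le cm^2/N$.)

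More seriously, the Chernov parameter $\theta=\frac Nm(1/c-1)$ is itself inadequate. Take $m=N=1$, so $Y_1$ is uniform on $(0,1)$. Your Chernov bound is then exactly $e^{1-c}\frac{c}{1-c}(1-e^{-\theta})$, while the right-hand side of \eqref{eq:left} equals $e^{1-c}\,c\,e^{c}\exp\big((1/c-1)e^{-1/c}\big)$. Their ratio, $(1-e^{-\theta})/\big((1-c)e^{c}\exp((1/c-1)e^{-1/c})\big)$, exceeds $1$ for small $c$ (about $1.004$ at $c=1/11$). Since the marginal is computed exactly here, no sharper estimate of it can save this choice of $\theta$.

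The missing idea is the shifted parameter $\lambda=\frac Nm(1/c-1)+1$. Its price, $e^{(\lambda-\theta)cm^2/N}=e^{cm^2/N}$, is exactly the first term of $\delta$. It is paid back by the conditional bound of \cref{lem:majo_laplace}, $\mathbb{E}[e^{-\lambda Y}\mid A=k]\le F(k,\lambda)$, which comes from $(1-y)^{k-1}\le e^{-(k-1)y}$, i.e. from an exponent $k+\lambda-1$ rather than $k+\lambda$.

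Since $F(k,\lambda)=1-\frac{(\lambda-1)(1-e^{-\lambda})}{k+\lambda-1}$ is concave in $k$, Jensen for $A_1\sim\mathrm{Bin}(N,1/m)$ gives $\mathbb{E}[e^{-\lambda Y_1}]\le F(N/m,\lambda)$. Your negative-association reduction then yields $\mathbb{E}[e^{-\lambda S}]\le F(N/m,\lambda)^m$, and the paper's last step, using $\log(u+\delta)\le\log u+\delta/u$, finishes. This bound is exactly \eqref{eq:logconcave}. The paper obtains it without negative association, by applying Jensen to $\sum_j\log F(A_j,\lambda)$ under the constraint $\sum_jA_j=N$.
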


\begin{proof}[Proof of "\cref{prop:tail} implies \cref{th:main}"]
For the left-tail we apply \cref{eq:right} with $C=1/\mu >1$:
\begin{align*}
\mathbb{P}\left(m^2 Z_N \leq \mu N\right)
&\stackrel{\text{eq.\eqref{eq:EncadrementZ}}}{\leq}\mathbb{P}\left(\frac{m^2}{\sum_{j=1}^m Y_j} \leq \mu N\right)
=\mathbb{P}\left(\sum_{j=1}^m Y_j \geq \frac{m^2}{\mu N}\right)
&\stackrel{\text{eq.\eqref{eq:right}}}{\leq}  \exp\left(-m\mathcal{J}(\mu)\right).
\end{align*}
For the right-tail we apply \cref{eq:left} with $c=2/\lambda<1$:
\begin{align*}
\mathbb{P}\left(m^2 Z_N \geq \lambda N\right)\stackrel{\text{eq.\eqref{eq:EncadrementZ}}}{\leq} 
\mathbb{P}\left(\frac{2m^2}{\sum_{j=1}^m Y_j} \geq \lambda N\right)
&=\mathbb{P}\left(\sum_{j=1}^m Y_j \leq \frac{2m^2}{\lambda N}\right)\\
&\stackrel{\text{eq.\eqref{eq:left}}}{\leq} \exp\left(-m\mathcal{J}(\lambda/2) +\delta(m,N,2/\lambda)\right).
\end{align*}
Using the inequality $xe^{-x-1}\leq e^{-2}$ finally gives that $\delta(m,N,c)\leq (c+e^{-2})m^2/N$ as claimed in \cref{th:main}.
\end{proof}

\subsection{Proof of Proposition \ref{prop:tail}: right-tail}
The purpose of this section is to prove the right-tail deviation inequality given in \cref{eq:right}. To do so we will make the intuition of \cref{sec:heuristic} rigorous by showing a comparison lemma for the $Y_j$'s.

For two $m$-dimensional random vectors $\mathbf{X}=(X_1,\dots X_m)$ and $\mathbf{X}'=(X'_1,\dots X'_m)$ we write 
$ \mathbf{X} \preccurlyeq_{\text{u.o.}} \mathbf{X}' $ to say that $\mathbf{X}$ is smaller than $\mathbf{X}'$ in the \emph{upper orthant order}, \emph{i.e.}  $\mathbb{P}(X_1 > t_1,X_2>t_2,\dots ,X_m>t_m)\leq \mathbb{P}(X_1' > t_1,X_2'>t_2,\dots ,X_m'>t_m)$ for every reals $t_1,\dots,t_m$ (see \emph{e.g.} \cite[Chap.6]{shaked2007stochastic} for a review of multivariate stochastic orders).

%Below we use the usual notation $X\preccurlyeq X'$ when a random variable $X$ is \emph{stochastically dominated} by $X'$, meaning that $\mathbb{P}(X> t) \leq \mathbb{P}(X'> t)$ for every real $t$.
\begin{lemma}\label{lem:dominationY}
For $N\geq 1$ and  $1\leq j\leq m$ let $(Y_1,\dots ,Y_m)$ be as in \cref{prop:LoiYj}. Let also $\mathcal{E}_1,\dots, \mathcal{E}_m$ be  independent exponential random variables with mean $1$. Then
$$
(Y_1,\dots ,Y_m)\preccurlyeq_{\text{u.o.}} \frac{m}{N}(\mathcal{E}_1,\dots, \mathcal{E}_m).
$$
%Furthermore the family $\{Y_1,\dots , Y_m\}$ is negatively associated.
\end{lemma}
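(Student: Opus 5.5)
We compute the joint upper-orthant probability of $(Y_1,\dots,Y_m)$ explicitly and compare it with the product of exponential tails. By the definition of the $Y_j$'s in the proof of \cref{prop:LoiYj}, $Y_j=\min\{W_k : X_k\in\mathcal{A}_j\}$ (with $Y_j=1$ if $\mathcal{A}_j=\emptyset$). The pairs $(\lceil mX_k\rceil, W_k)$ are i.i.d., each uniform on $\{1,\dots,m\}\times(0,1)$.

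First we reduce to $t_j\in[0,1)$. If some $t_j\geq 1$, the left-hand side $\mathbb{P}(Y_1>t_1,\dots,Y_m>t_m)$ vanishes, since $Y_j\leq 1$. If some $t_j<0$, the corresponding constraint is void on both sides, because $Y_j>0$ and $\mathcal{E}_j>0$ almost surely. So we may replace such $t_j$ by $0$ and assume $t_j\in[0,1)$ throughout.

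For such $t$'s, the event $\{Y_j>t_j \ \forall j\}$ says that every point $k$ avoids the region $\bigcup_j \{j\}\times(0,t_j]$. For a single point this happens with probability
\[
1-\frac1m\sum_{j=1}^m t_j .
\]
The $N$ points are independent, so
\[
\mathbb{P}(Y_1>t_1,\dots,Y_m>t_m)=\Bigl(1-\frac{1}{m}\sum_{j=1}^m t_j\Bigr)^N .
\]
This closed form is the key step. It sidesteps the multinomial conditioning entirely, although one could also obtain it by summing over $(a_1,\dots,a_m)$ with the multinomial theorem and $\mathbb{P}(Y_j>t_j\mid a_j)=(1-t_j)^{a_j}$.

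It remains to use $1-x\leq e^{-x}$, which gives
\[
\Bigl(1-\frac{1}{m}\sum_j t_j\Bigr)^N\leq \exp\Bigl(-\frac{N}{m}\sum_j t_j\Bigr)=\prod_{j=1}^m \mathbb{P}\Bigl(\frac{m}{N}\mathcal{E}_j>t_j\Bigr)=\mathbb{P}\Bigl(\frac mN\mathcal{E}_1>t_1,\dots,\frac mN\mathcal{E}_m>t_m\Bigr).
\]
This is exactly the upper orthant inequality claimed in \cref{lem:dominationY}.

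The only point requiring care is the boundary handling: the conventions $Y_j=1$ for empty registers and $t_j\geq1$ or $t_j<0$. Once the independent-points description is in place, the rest is immediate.
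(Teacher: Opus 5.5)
Your proof is correct and follows the paper's argument: you reach the same closed form $\mathbb{P}(Y_1>t_1,\dots,Y_m>t_m)=\bigl(1-\frac1m\sum_j t_j\bigr)^N$ and then apply $1-x\le e^{-x}$. The differences are minor. You derive the closed form from the i.i.d.\ points picture rather than by conditioning on the multinomial counts and applying the multinomial theorem, and you explicitly handle negative $t_j$'s, a case the paper's proof passes over.
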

%We first observe a direct consequence of \cref{hyp:IM}. Since the minimum of $k$ uniform random variables in $(0,1)$ has density $k(1-t)^{k-1}\mathrm{dt}$ then we can compute the joint distribution of $Y_j$'s.
%\begin{claim}\label{claim:density_Yj}
%Let $k_1,\dots ,k_m$ be positive integers such that $\sum_j k_j=N$. Conditionally to $A_1=k_1,\dots , A_m=k_m$ then $(Y_1,\dots,Y_m)$ has joint density
%$$
%k_1(1-t_1)^{k_1-1}\mathbf{1}_{t_1\in (0,1)}\times k_2(1-t_2)^{k_2-1}\mathbf{1}_{t_2\in (0,1)}\ \times \dots \times k_m(1-t_m)^{k_m-1}\mathbf{1}_{t_m\in (0,1)}\mathrm{dt_1\dots dt_m}.
%$$
%(In particular $Y_j$'s are independent conditioned to $A_j$'s.)
%\end{claim}

\begin{proof}[Proof of \cref{lem:dominationY}]
Let $t_1,\dots ,t_m$ be reals in $(0,1)$. Recall from \cref{prop:LoiYj} that conditionally to $A_1=a_1,\dots, A_m=a_m$ the variables $Y_j$'s are independent and if $a_j>0$ then each $Y_j$ has density $a_j(1-y)^{a_j-1}$. Hence for every 
$a_j>0$ one has $\mathbb{P}(Y_j > t_j | A_j=a_j)=(1-t_j)^{a_j}$, and this inequality also holds if $a_j=0$ thanks to the initialization $Y_j\leftarrow 1$.

Hence by the law of total expectation
%The event $\{Y_1 > t_1;Y_2>t_2;\dots Y_m>t_m\}$ occurs exactly when every hashed value lies outside $\cup_j (jm,jm+t_j)$. Then
\begin{align*}
\mathbb{P}(Y_1 > t_1,Y_2>t_2,\dots ,Y_m>t_m)&=
\mathbb{E}\left[\mathbb{P}(Y_1 > t_1,Y_2>t_2,\dots ,Y_m>t_m\ |\ A_1,\dots,A_m)\right]\\
&=\mathbb{E}\left[(1-t_1)^{A_1}\times \dots \times (1-t_m)^{A_m}\right]\\
&=\sum_{a_1+\dots +a_m=N}\binom{N}{a_1 a_2 \ \dots a_m}\frac{1}{m^N}(1-t_1)^{a_1}\times \dots \times(1-t_m)^{a_m}\\
&=\left(1-\frac{1}{m}t_1-\dots -\frac{1}{m}t_m\right)^N\\
&\leq \exp\left(-\frac{N}{m}\sum_j t_j\right) = \prod_{j=1}^m \mathbb{P}\left(\frac{m}{N}\mathcal{E}_j > t_j\right).
\end{align*}
 If one of the $t_j$'s is greater or equal to $1$ then the left-hand side in the above display is zero, so the last inequality is in fact true for every  reals $t_1,\dots ,t_m$ and the lemma is proved.
\end{proof}

\begin{proof}[Proof of  \cref{prop:tail}, \cref{eq:right}]
We want to apply the Chernov method  (see \emph{e.g.} \cite[Chap.4]{mitzenmacher2017probability}).
 Let $C>1$ be arbitrary and let $\lambda>0$ to be chosen later and satisfying  $0<\lambda m/N<1$,
$$
\begin{array}{r c l l}
\mathbb{P}\left(\sum_{j=1}^m Y_j > \frac{Cm^2}{N}\right)%&\leq& \mathbb{P}\left(\sum_{j=1}^m \mathcal{E}_j > Cm\right) \\
&\leq& \mathbb{P}\left(e^{\lambda \Sigma_j Y_j} > e^{\lambda Cm^2/N}\right)& \\
&\leq& e^{-\lambda Cm^2/N}\mathbb{E}[e^{\lambda \Sigma_j Y_j}]. &%\text{(Markov's inequality).}
\end{array}
$$
Using \cite[Th.6.G.1]{shaked2007stochastic}, the upper orthant domination given by \cref{lem:dominationY} implies a comparison for the Laplace transforms. Namely, for every $\lambda >0$, 
$$
\mathbb{E}\left[\exp\left( \lambda \sum_{j=1}^m Y_j\right)\right] \leq \mathbb{E}\left[\exp\left( \lambda \frac{m}{N}\sum_{j=1}^m \mathcal{E}_j\right)\right].%=\mathbb{E}\left[\exp\left( \lambda \frac{m}{N} \mathcal{E}_1\right)\right]^m
$$
Therefore
$$
\begin{array}{r c l l}
\mathbb{P}\left(\sum_{j=1}^m Y_j > \frac{Cm^2}{N}\right)%&\leq& \mathbb{P}\left(\sum_{j=1}^m \mathcal{E}_j > Cm\right) \\
&\leq& e^{-\lambda Cm^2/N}\mathbb{E}[\exp\left(\lambda \frac{m}{N} \Sigma_j \mathcal{E}_j\right)] & \\
&\leq& e^{-\lambda Cm^2/N}\mathbb{E}[\exp\left(\lambda \frac{m}{N}  \mathcal{E}_1\right)]^m & \\
&\leq& \exp\left( -\lambda Cm^2/N -m\log(1-\lambda m/N)\right), &
\end{array}
$$
where we used the identity $\mathbb{E}[e^{u \mathcal{E}_1}]=1/(1-u)$ with $u=\lambda m/N$ (this is why we need $0<\lambda m/N<1$). Choosing now $\lambda=\frac{N}{m}(1-1/C)$, which satisfies $\lambda m/N<1$, we obtain
$$
\mathbb{P}\left(\sum_{j=1}^m Y_j > \frac{Cm^2}{N}\right)\leq \exp\left(-m\left(C-1)+m\log(C)\right)\right),
$$
which is  \cref{eq:right}.
\end{proof}

\subsection{Proof of \cref{prop:tail}: left-tail}
%For $1\leq j\leq m$ let $A_j=|\mathcal{A}_j|$ denote the number of hashed values lying in the $j$-th interval. On the (very unlikely) event $A_j=0$ then $Y_j=1$ and this value will annoy us. For this reason, we will show the lemma for a variant of the $Y_j$. %Before embarking in the proof, we observe that this is enough to prove \cref{prop:tail} with a small variant.
%For each $1\leq j\leq m$ put 
%$$
%\mathcal{Y}_j=
%\begin{cases}
%Y_j &\text{ if }A_j>0,\\
%0 &\text{ if }A_j=0
%\end{cases}.
%$$
%Since $\mathcal{Y}_j\leq Y_j$ we have $\mathbb{P}\left(\sum_{j=1}^m Y_j\leq \frac{c}{N}\right)\leq \mathbb{P}\left(\sum_{j=1}^m \mathcal{Y}_j\leq \frac{c}{N}\right)$ it is enough to prove a deviation inequality for the $\mathcal{Y}_j$'s.

The idea behind the proof of the left-tail is to circumvent the problem of the dependence of $Y_j$'s using a concavity argument. We begin with a simple estimation of the Laplace transform of $Y_j$.
\begin{lemma}\label{lem:majo_laplace}
For every $\lambda >1$ and $k\geq 0$,
$$
\mathbb{E}[e^{-\lambda Y_j}\ |\ A_j=k]%=k\int_{0}^1\exp(-\lambda u/m)(1-u)^{k-1}\mathrm{du} 
%\leq \exp\left(-\frac{\lambda}{k+1}+\frac{\lambda^2}{(k+1)^2}\right).%\leq \exp\left(-\frac{\lambda/m}{\lambda/m +k}\right).
\leq \frac{k+e^{-\lambda}(\lambda-1)}{k+\lambda-1} =:F(k,\lambda).
$$
\end{lemma}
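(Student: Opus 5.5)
The plan is to compute the conditional Laplace transform through two integration-by-parts identities, and then bound a ratio of consecutive integrals using log-convexity. Throughout, write $E_k:=\mathbb{E}[e^{-\lambda Y_j}\,|\,A_j=k]$ and $I_k:=\int_0^1 e^{-\lambda y}(1-y)^k\,\mathrm{d}y$ for $k\geq 0$.

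\textbf{Case $k=0$.} Here $Y_j=1$, so $E_0=e^{-\lambda}$. Also $F(0,\lambda)=e^{-\lambda}(\lambda-1)/(\lambda-1)=e^{-\lambda}$, so equality holds.

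\textbf{Two identities for $k\geq 1$.} By \cref{prop:LoiYj}, $E_k=\int_0^1 e^{-\lambda y}k(1-y)^{k-1}\,\mathrm{d}y$.
\begin{itemize}
\item Reading this integral directly gives $E_k=k\,I_{k-1}$.
\item Integrating by parts, with $-(1-y)^k$ as the primitive of $k(1-y)^{k-1}$, gives $E_k=1-\lambda I_k$. This second identity also holds for $k=0$.
\end{itemize}
Combining the two,
$$\frac{1-E_k}{E_k}=\frac{\lambda}{k}\,r_k,\qquad r_k:=\frac{I_k}{I_{k-1}}.$$
On the other side, a direct computation gives
$$\frac{1-F(k,\lambda)}{F(k,\lambda)}=\frac{(\lambda-1)(1-e^{-\lambda})}{k+e^{-\lambda}(\lambda-1)}.$$
Since $x\mapsto (1-x)/x$ is decreasing on $(0,1]$, the lemma reduces to the lower bound
$$\frac{\lambda}{k}\,r_k\;\geq\;\frac{(\lambda-1)(1-e^{-\lambda})}{k+e^{-\lambda}(\lambda-1)}.$$

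\textbf{Key step: a lower bound on $r_k$.} The main point is that $k\mapsto I_k$ is log-convex. This follows from Cauchy--Schwarz (or Hölder) applied to the measure $e^{-\lambda y}\,\mathrm{d}y$: $I_{k-1}^2\leq I_{k-2}I_k$. Hence $r_k$ is nondecreasing in $k$, and $r_k\geq r_1$ for all $k\geq 1$. The first two integrals are explicit:
$$I_0=\frac{1-e^{-\lambda}}{\lambda},\qquad I_1=\frac{1-I_0}{\lambda}=\frac{\lambda-1+e^{-\lambda}}{\lambda^2},$$
where $I_1$ comes from $E_1=I_0=1-\lambda I_1$. Therefore
$$\frac{1-E_k}{E_k}\;\geq\;\frac{\lambda}{k}\,r_1=\frac{\lambda-1+e^{-\lambda}}{k(1-e^{-\lambda})}.$$

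\textbf{Final comparison.} It remains to check
$$(\lambda-1+e^{-\lambda})\bigl(k+e^{-\lambda}(\lambda-1)\bigr)\;\geq\; k(\lambda-1)(1-e^{-\lambda})^2 .$$
Since $\lambda>1$, every term on the left is nonnegative, so the left side is at least $k(\lambda-1)$. Since $(1-e^{-\lambda})^2\leq 1$, this is at least the right side. This is the only place where $\lambda>1$ is used.

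The main obstacle is the lower bound on $r_k$. Jensen's inequality applied to the density of $Y_j$ fails for large $\lambda$, and a direct induction on the recurrence $E_k=k(1-E_{k-1})/\lambda$ reverses the direction of the inequality at each step. The log-convexity monotonicity of $r_k$ avoids both problems.
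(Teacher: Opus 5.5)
Your proof is correct. The identity $E_k=1-\lambda I_k$, the Cauchy--Schwarz log-convexity $I_{k-1}^2\le I_{k-2}I_k$ giving $r_k\ge r_1$, the values of $I_0$ and $I_1$, and the final cross-multiplied comparison all check out.

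It is, however, a much longer route than the paper's. The paper bounds the integrand pointwise. Since $0\le 1-y\le e^{-y}$, for $k\ge 1$ one has $e^{-\lambda y}k(1-y)^{k-1}\le k\,e^{-(k+\lambda-1)y}$. Extending the integral to $[0,+\infty)$ then gives $E_k\le k/(k+\lambda-1)$ in one line. The nonnegative term $e^{-\lambda}(\lambda-1)$ in the numerator of $F$ is there only to cover $k=0$. So the obstacles you mention (Jensen failing, the recurrence reversing inequalities) are sidestepped by this elementary inequality, with no need for integration by parts or log-convexity.

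What your route buys is a slightly sharper intermediate estimate: $E_k\le k/(k+c_\lambda)$ with $c_\lambda=(\lambda-1+e^{-\lambda})/(1-e^{-\lambda})=\lambda-1+\lambda/(e^{\lambda}-1)>\lambda-1$. This bound is exact at $k=1$. But you discard that gain in your final comparison, and the paper's later log-concavity step only uses $F$. For the lemma as stated and its use in the paper, the extra machinery is unnecessary.
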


\begin{proof}[Proof of  \cref{lem:majo_laplace}]
Recall that if $A_j=0$ then $Y_j=1$ by our choice of initialization. In this case both sides of the inequality are $e^{-\lambda}$ so the the inequality holds for $k=0$. We now assume $k\geq 1$.  %Let $\mathcal{B}_k$ be a random variable having this distribution, \emph{i.e.} having the Beta$(1,k)$ distribution. 
By \cref{prop:LoiYj},
\begin{align*}
\mathbb{E}[e^{-\lambda Y_j}\ |\ A_j=k]&= \int_0^1 e^{-\lambda y} k(1-y)^{k-1}\mathrm{dy} \\
&\leq  k \int_0^{+\infty} e^{-(k+\lambda-1)y} \mathrm{dy} \qquad \text{(using $1-y\leq \exp(-y)$)}\\
&=  \frac{k}{k+\lambda-1} \leq   \frac{k+e^{-\lambda}(\lambda-1)}{k+\lambda-1}.
\end{align*}

\end{proof}

\begin{proof}[Proof of \cref{prop:tail}, \cref{eq:left}] By  \cref{prop:LoiYj}  again we have that, conditionally on $A_1,\dots,A_m$, the $Y_j$'s are independent. Therefore
$$
\begin{array}{r c l l}
\mathbb{E}[e^{-\lambda \Sigma_j Y_j}]&=&\mathbb{E}\left[\mathbb{E}[e^{-\lambda \Sigma_j Y_j}\ |\ A_1,A_2,\dots, A_m]\right] & \\
&=&\mathbb{E}\left[\prod_{j=1}^m \mathbb{E}\left[e^{-\lambda Y_j}\ |\ A_j\right]\right]& \text{(cond. independence)}\\
&\leq&\mathbb{E}\left[\prod_{j=1}^m  F(A_j,\lambda)\right].& \text{(\cref{lem:majo_laplace})}
\end{array}
$$
%where we put
%$$
%F(x,\lambda)= \frac{1}{1+\frac{\lambda}{x+1}}.
%$$
%Using now the inequality $\frac{1}{1+u}\leq e^{u/(1+u)}$ with $u=\lambda/(A_j+1)$ we get
%\begin{align*}
%\mathbb{E}\left[\prod_{j=1}^m  F(A_j,\lambda)\right]
%&=\mathbb{E}\left[\exp\left(-\sum_{j=1}^m \frac{\lambda}{A_j+1+\lambda}\right)\right]
%\end{align*}
Consider now for any fixed $\lambda>1$ the function $f_\lambda:x\mapsto  \log(F(x,\lambda))$. We have that
$$
f_\lambda''(x) = \frac{1}{(x+\lambda-1)^2}-\frac{1}{(x+e^{-\lambda}(\lambda -1))^2}.
$$
Hence $f_\lambda$ is concave over $[0,+\infty)$  for $\lambda>1$ and this implies  
$\frac{1}{m}\sum_{j=1}^m f_\lambda(A_j)\leq  f_\lambda\left( \frac{1}{m}\sum_{j=1}^m A_j \right)=f_\lambda(N/m)$. 
 Therefore, for every $\lambda >1$,
\begin{equation}\label{eq:logconcave}
\mathbb{E}[e^{-\lambda \Sigma_j Y_j}]\leq 
\mathbb{E}\big[\prod_{j=1}^m  F(A_j,\lambda)\big]
= \mathbb{E}\big[\exp\big(\sum_{j=1}^m f_\lambda(A_j)\big)\big] 
\leq  \exp\left(mf_\lambda(N/m)\right).
%= \left( F(N/m,\lambda)\right)^m.
\end{equation}
%Hence
%\begin{array}{r c l l}
%\mathbb{E}[e^{-\lambda \Sigma_j \mathcal{Y}_j}]&\leq & \mathbb{E}[\prod_{j=1}^m  \frac{A_j}{\lambda/m +A_j}] & \\
%&=& \mathbb{E}[\exp\left(\sum_{j=1}^m f(A_j)\right)] &\\
%&\leq&  \exp\left(mf(N/m)\right)& \\
%&=& \left( \frac{N}{N+\lambda}\right)^m.&
%\end{array}
As for the proof of the right-tail inequality we conclude with the Chernov method. Let $c<1$,
$$
\begin{array}{r c l l}
\mathbb{P}\left(\sum_{j=1}^m Y_j\leq \frac{cm^2}{N}\right)
%&\leq & \mathbb{P}\left(e^{-\lambda \Sigma_j Y_j}\geq e^{-\lambda cm^2/N}\right) &\\ 
&\leq & e^{\lambda cm^2/N}\mathbb{E}[e^{-\lambda \Sigma_j  Y_j}] &\\ 
&=& \exp\left( \lambda cm^2/N + mf_\lambda(N/m)\right)  & \\%\text{(using eq.\eqref{eq:logconcave})}\\
&=& \exp\left( \lambda c\frac{m^2}{N} +m\log(N/m+e^{-\lambda}(\lambda -1)) - m\log \left(N/m +\lambda -1\right)\right).& \\
\end{array}
$$
If we now choose $\lambda=N/m\times (1/c-1)+1$, which is $>1$ as required, then we obtain
$$
\begin{array}{r c l l}
\mathbb{P}\left(\sum_{j=1}^m Y_j\leq \frac{cm^2}{N}\right)
&\leq& \exp\bigg( m(1-c) + \delta_1(m,N,c) & \\
& &+ m\log(N/m + \delta_2(m,N,c)) -m\log(N/m)-m\log(1/c) \bigg), & %\\
%&=&\exp\left( -m\mathcal{I}(c) + \delta(m,N,c)\right) &
\end{array}
$$
where we put
\begin{align*}
\delta_1(m,N,c)&=c \frac{m^2}{N},\\
\delta_2(m,N,c)&=e^{-\lambda}(\lambda -1)=\exp(-N/m(1/c-1)-1)\times N/m\times (1/c-1).
\end{align*}
Using finally the mean-value inequality $\log(u+\delta)\leq \log(u)+\frac{\delta}{u}$ we obtain
\begin{align*}
\mathbb{P}\left(\sum_{j=1}^m Y_j\leq \frac{cm^2}{N}\right)&\leq \exp\left( -m\left(-1+c+\log(1/c)\right) + \delta_1(m,N,c)+\frac{m}{N/m}\delta_2(m,N,c)\right)\\
%&\leq  \exp\left( -m\left(-1+c+\log(1/c)\right) + \delta(m,N,c)\right),\\
&\leq  \exp\left( -m\mathcal{J}(1/c) + \delta(m,N,c)\right),
\end{align*}
where $\delta(m,N,c)$ is as stated in \cref{prop:tail}.
\end{proof}

%%%%%%%%%%%%%%%%%
\section{Final comments}

\subsection{One-sided exact confidence interval for $N$}
One practical motivation to the original algorithm is the detection of attacks on large networks (see the discussion in \cite{giroire2009order}). Since certain attacks generate an abnormal number of connections with identical addresses, it is crucial to have good one-sided confidence intervals for $N$. This makes it possible to detect an excessively high $N$ value with very few false positives.

Plugging the simple estimate $\mathcal{J}(1/x)\geq \frac{1}{4}(x-1)^2$ valid for all $0<x<2$ into \cref{eq:main-left} gives that for all $0<\eps<1$ 
%$$
%\mathbb{P}\left(N > \left(1+2\sqrt{\frac{1}{m}\log(1/\eps)}\right) \times m^2 Z_N \right) \leq \eps.
%$$
%In other words,
$$
\left[\ 0\ ,\ \left(1+2\sqrt{\frac{1}{m}\log(1/\eps)}\right) \times m^2 Z_N \right]
$$
contains the actual value of $N$ with probability larger than $1-\eps$. In other words this is a non-asymptotic one-sided confidence interval for $N$ with confidence $\geq 1-\eps$.

To give an idea for $\eps=0.001$ and $m=128$, this guarantees that  $N$ is less than $1.465\times m^2 Z_N$ with probability larger than $0.999$.

\subsection{How to improve the right-tail inequality: $2^{-\rho(Y_j)}$ instead of $Y_j$ }\label{sec:vraie_laplace}
There is room for improvement in \cref{th:main}: it is likely that the left-hand side in the right-tail inequality \cref{eq:main-right} decays exponentially $\lambda > \lambda_0$ with $\lambda_0 < 2$.
A way to improve this is to keep $2^{-\rho(Y_j)}$ instead of $Y_j$ in the analysis and replace the upper bound of $\mathbb{E}[e^{-\lambda Y_j}\ |\ A_j=k]$ in \cref{lem:majo_laplace}  by a sharp upper bound for
\begin{align*}
\mathcal{L}(k,\lambda):=\mathbb{E}[e^{-\lambda 2^{-\rho(Y_j)}}\ |\ A_j=k]&=\sum_{n=1}^{+\infty} e^{-\lambda 2^{-n}} \mathbb{P}(2^{-n}\leq Y_j \leq 2^{-(n-1)})\\
%&=\sum_{n=1}^{+\infty} e^{-\lambda 2^{-n}} \int_{2^{-n}}^{2^{-n+1}} k(1-y)^{k-1}dy\\
&=\sum_{n=1}^{+\infty} e^{-\lambda 2^{-n}} \left((1-2^{-n})^{k}- (1-2^{-(n-1)})^{k} \right).
\end{align*}

Lemma 1 in \cite{flajolet2007hyperloglog} provides clues for getting an upper bound $\mathcal{L}(k,\lambda)\leq F(k,\lambda)$ for some $F$ but it does not seem easy to obtain a sharp, simple, and non-asymptotic upper bound. Furthermore, we need to find a function $F$ which is log-concave in $k$ in order to apply \cref{eq:logconcave}. We did not pursue the calculations further in that direction.

\subsection{Applications to \textsc{MinCount}}\label{sec:MinCount}
We briefly discuss an application of \cref{prop:tail}  to the analysis of \texttt{MinCount} proposed by Giroire \cite{giroire2009order}. This algorithm can be seen as a continuous cousin of \texttt{HyperLogLog}. 
Using the same notation as in \cref{sec:heuristic},  \texttt{MinCount} returns a simple function of $Y^{(k)}_1,\dots,Y^{(k)}_m$, where
$$
Y^{(k)}_j = \text{$k$-th smallest hashed value among }\left\{X_\ell,\ X_{\ell}\in\left((j-1)\tfrac{1}{m},j\tfrac{1}{m}\right)\right\}.
$$
Several estimators are proposed in \cite{giroire2009order}, information-theoretic arguments \cite{chassainggerin} later suggested to return\footnote{Note that we have slightly modified the renormalization compared to \cite{chassainggerin}, in order to ensure that the notation is consistent with the present article. Hence the factor $km^2$ instead of $km-1$.}
$$
\tilde{N}:=\frac{km^2}{\sum_{j=1}^{m}Y^{(k)}_j}.
$$
Observe that the case $k=1$ is very similar to \texttt{HyperLogLog} (see Algorithm \ref{algo:MinCount}).
\begin{algorithm}
\caption{MinCount, case $k=1$}\label{algo:MinCount}
\begin{algorithmic}[1]
 \State\label{l:input}  \textbf{input:} multiset $\mathcal{M}$ of items from domain $\mathcal{D}$
 \State\label{l:par}  \textbf{parameters:} integer $b\geq 1$; $m \gets 2^b$
\State\label{l:init} \textbf{initialize} a collection of $m$ registers $Y_1=\dots =Y_m=1$
\For{$v\in \mathcal{M}$} 
\State\label{l:x} $x \gets h(v)$ and write $x=0.x_1x_2x_3\dots$ \hfill // hashing
\State\label{l:j} $j \gets 1+ \langle x_1x_2\dots x_b\rangle_2$ 	\hfill // register determined by the first $b$ bits of $x$
\State\label{l:w} $w \gets 0.x_{b+1}x_{b+2}\dots $ \hfill  // extracting a fresh binary string out of $x$
\State\label{l:M} $Y_j \gets \mathrm{min}\{Y_j,w\}$ \hfill // updating the minimum in the $j$-th register 
\EndFor
\State\label{l:hat_N}  \textbf{return} $\tilde{N}:=\frac{m^2}{\sum_{j=1}^{m}Y_j}$ 
\end{algorithmic}
\end{algorithm}

If we apply \cref{prop:tail} to Algorithm \ref{algo:MinCount} we obtain sharp deviation estimates:
\begin{proposition}[Exponential deviations for \texttt{MinCount} (case $k=1$)]\label{prop:MinCount}\ \\
Assume \cref{hyp:IM}. For every $N\geq 1$ the output of  Algorithm \ref{algo:MinCount} satisfies the following inequalities. 
\begin{itemize}
\item {\bf Left-tail.} For all $\mu \leq 1$
$$
\mathbb{P}\left(\tilde{N} \leq \mu N\right)\leq  \exp\left(-m\mathcal{J}(\mu) \right).
$$
\item {\bf Right-tail.} For all $\lambda \geq 1$
$$
\mathbb{P}\left(\tilde{N} \geq \lambda N\right)\leq \exp\left(-m\mathcal{J}(\lambda) +\delta(m,N,1/\lambda) \right),
$$
where $ \delta(m,N,c)$ is  given by \cref{eq:delta}.
\end{itemize}
\end{proposition}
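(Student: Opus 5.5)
The plan is to reduce \cref{prop:MinCount} directly to \cref{prop:tail}, exactly as in the proof that \cref{prop:tail} implies \cref{th:main}, except that no sandwich \cref{eq:EncadrementZ} is needed. There are no integer parts here, so no factor $2$ is lost.

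The first step is to identify the joint law of the registers $Y_1,\dots,Y_m$ of Algorithm \ref{algo:MinCount} with the vector of \cref{prop:LoiYj}. Under \cref{hyp:IM}, the hashed values are $N$ i.i.d.\ uniforms $X_k$. The register index is $\lceil mX_k\rceil$ and the stored value is $W_k=\{mX_k\}$. Each register is initialized to $1$ and updated by taking minima, so $Y_j=\min\{W_k : X_k\in\mathcal{A}_j\}$, with the convention $Y_j=1$ if $\mathcal{A}_j$ is empty. This is verbatim \cref{eq:Yj}. Hence the proof of \cref{prop:LoiYj} already shows that $(Y_1,\dots,Y_m)$ has the law described there: multinomial occupancies $A_j$ and, conditionally on them, independent minima of $A_j$ uniforms. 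In particular \cref{prop:tail} applies to $\sum_j Y_j$, and $\tilde N = m^2/\sum_j Y_j$ exactly.

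For the left tail, fix $\mu\le 1$. If $\mu=1$ the bound is trivial since $\mathcal{J}(1)=0$, so assume $\mu<1$ and set $C=1/\mu>1$. Then
$$\mathbb{P}(\tilde N\le \mu N)=\mathbb{P}\Big(\sum_j Y_j\ge \tfrac{Cm^2}{N}\Big)\le \exp(-m\mathcal{J}(1/C))=\exp(-m\mathcal{J}(\mu))$$
by \cref{eq:right}. The case $\mu\le 0$ is vacuous because $\tilde N>0$.

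For the right tail, fix $\lambda\ge1$. If $\lambda=1$ the bound is again trivial, since $\mathcal{J}(1)=0$ and $\delta\ge0$; alternatively, interpret it by continuity. For $\lambda>1$, set $c=1/\lambda\in(0,1)$. Then
$$\mathbb{P}(\tilde N\ge\lambda N)=\mathbb{P}\Big(\sum_j Y_j\le \tfrac{cm^2}{N}\Big)\le \exp(-m\mathcal{J}(1/c)+\delta(m,N,c))$$
by \cref{eq:left}, and $1/c=\lambda$ gives the claim.

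The only point needing care is the first step: checking that the MinCount registers, including the empty-register convention and the fact that \cref{prop:tail} is stated for all $m\ge1$, coincide in law with the $Y_j$ of \cref{prop:LoiYj}. This follows from the same argument already given for \cref{prop:LoiYj}, so I expect no real obstacle.
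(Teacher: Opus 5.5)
Your proposal is correct and follows the paper's argument. The paper's entire proof is the remark that Algorithm~\ref{algo:MinCount} outputs exactly $\tilde N = m^2/\sum_j Y_j$, with $(Y_j)$ distributed as in \cref{prop:LoiYj}, so that \cref{eq:right} with $C=1/\mu$ and \cref{eq:left} with $c=1/\lambda$ give the two tails without the factor-$2$ loss from \cref{eq:EncadrementZ}; your identification of the registers and your treatment of the boundary cases $\mu=1$, $\lambda=1$ (where $\mathcal{J}=0$ and $\delta\ge 0$) simply fill in details the paper leaves implicit.
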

When $k=3$, Giroire shows that the estimators based on $Y^{(k)}_1,\dots,Y^{(k)}_m$ exhibit much better behavior in terms of both expectation and variance. %Unfortunately, a simple analogue of \cref{prop:MinCount} for $k\neq 1$ would require a substantial amount of work: there is no stochastic domination as elegant as that of \cref{lem:dominationY}. %
\cref{prop:MinCount} however shows that even if the case $k=1$ \texttt{MinCount} provides a fairly good estimation of $N$.
For example, for $m=128$ and $N=10^9$ this bounds guarantee
$$
\mathbb{P}\left( \tilde{N}  \leq 0.5 N\right)<  9\times 10^{-18},\qquad
\mathbb{P}\left( \tilde{N}  \geq 1.5 N\right) < 10^{-4}.
$$

\bibliography{BiblioHLL}
\end{document}